\documentclass[12pt]{article}  
\usepackage{amssymb}
\usepackage{amsmath}
\usepackage{times}
\usepackage{fullpage}
\usepackage{epsfig}
\usepackage{graphicx}
\usepackage{epstopdf}
\usepackage{todonotes}
\usepackage{hyperref}

\newcommand{\E}{{\sf E}}
\newcommand{\Var}{{\sf Var}}
\newcommand{\ttl}{{\tt 1}}
\newcommand{\tto}{{\tt 0}}

\newcommand{\drop}[1]{}
\newcommand{\ppmod}{\rule{-1.5ex}{0ex}\pmod}

\newcommand{\floor}[1]{\lfloor {#1} \rfloor}

\newcommand{\req}[1]{(\ref{#1})}

\newtheorem {lemma} {Lemma}[section]
\newtheorem {fact} [lemma] {Fact}
\newtheorem {definition} {Definition}

\newtheorem {theorem}[lemma] {Theorem}
\newtheorem {observation}[lemma] {Observation}
\newtheorem {question}{Question}[section]
\newtheorem {exercise}[question]{Exercise}
\newcommand{\qed}{\rule{1ex}{1ex}}
\newenvironment{proof}[1][]{\paragraph*{Proof{#1}}}{\hfill \qed\smallskip\\}

\newcommand\eps\varepsilon
\newcommand\fct\rightarrow

\newcommand\ceil[1]{\lceil {#1}\rceil}

\newcommand\ol\overline

\usepackage{color}
\usepackage{xcolor}

\title{High Speed Hashing for Integers and Strings} 
\author{Mikkel Thorup}
\begin{document}
\maketitle 
\begin{abstract}
These notes describe the most efficient hash functions currently known
for hashing integers and strings. These modern hash functions are
often an order of magnitude faster than those presented in standard text
books. They are also simpler to implement, and hence a clear win in
practice, but their analysis is harder. Some of the most practical
hash functions have only appeared in theory papers, and some of them
require combining results from different theory papers. The goal here
is to combine the information in lecture-style notes that can be used
by theoreticians and practitioners alike, thus making these practical
fruits of theory more widely accessible.
\end{abstract}

\section{Hash functions}
The concept of truly independent hash functions is extremely useful in
the design of randomized algorithms. We have a large universe $U$ of
keys, e.g., 64-bit numbers, that we wish to map randomly to a range
$[m]=\{0,\ldots,m-1\}$ of hash values.  A {\em truly random hash
  function\/} $h:U\fct [m]$ assigns an independent uniformly random
variable $h(x)$ to each key in $x$. The function $h$ is thus a
$|U|$-dimensional random variable, picked uniformly at random among
all functions from $U$ to $[m]$.

Unfortunately truly random hash functions are idealized objects that
cannot be implemented. More precisely, to represent a truly random
hash function, we need to store at least $|U|\log_2 m$ bits, and in
most applications of hash functions, the whole point in hashing is
that the universe is much too large for such a representation 
(at least not in fast internal memory). 

The idea is to let hash functions contain only a small element or seed 
of randomness so that the hash function is sufficiently random for the 
desired application, yet so that the seed is small enough that we can store it when first it is fixed. As an example, if $p$ is prime, a random
hash function $h:[p]\to[p]=\{0,\ldots,p-1\}$ could be 
$h(x)=(ax+b)\bmod p$  where $a$ and $b$ are random numbers that together form
the random seed describing the function.  In these notes
we will discuss some basic forms of random hashing that are very
efficient to implement, and yet have sufficient randomness for some
important applications. 
\subsection{Definition and properties}	
	\begin{definition}
	A hash function $h:U\to [m]$ is a random variable in the class of all functions $U\to [m]$, that is, it consists of a random variable $h(x)$ for each $x\in U$.
\end{definition}

For a hash function, we care about roughly three things:
\begin{description}
	\item[Space] The size of the random seed that is necessary to calculate $h(x)$ given $x$,
	\item[Speed] The time it takes to calculate $h(x)$ given $x$,
	\item[Properties of the random variable.]
\end{description}

In the next sections we will mention different desirable properties of
the random hash functions, and how to implement them them efficiently.
First we introduce universal hashing in
Section \ref{sec:universal}, then we introduce strongly universal
hashing in Section \ref{sec:strong-universal}. In both cases, we
present very efficient hash function if the keys are 32- or 64-bit
integers and the hash values are bit strings.  In Section
\ref{sec:ranges} we show how we can efficiently produce hash values in
arbitrary integer ranges. In Section \ref{sec:strings}, we
show how to hash keys that are strings. Finally, in 
Section~\ref{sec:beyond-strong}, we briefly mention hash functions
that have stronger properties than strong universality.

\section{Universal hashing}\label{sec:universal}
The concept of universal hashing was introduced by Carter and Wegman in \cite{CW79}. 
We wish to generate 
a random hash function $h:U\fct[m]$ from a key universe $U$ to
a set of hash values $[m]=\{0,\ldots,m-1\}$.
We think of $h$ as a random variable following some distribution over 
functions $U\fct [m]$. 
We want $h$
to be {\em universal\/} which means
that for any given 
distinct keys $x,y\in U$, when $h$ is picked at random (independently of
$x$ and $y$), we have {\em low collision probability}:
\[\Pr_{h}[h(x)= h(y)]\leq 1/m.\]
For many applications, it suffices if for some $c=O(1)$, we have
\[\Pr_{h}[h(x)= h(y)]\leq c/m.\]
Then $h$ is called {\em $c$-approximately universal}.

In this chapter we will first give some concrete applications of
universal hashing. Next we will show how to implement universal
hashing when the key universe is an integer domain
$U=[u]=\{0,\ldots,u-1\}$ where the integers fit in a machine word,
that is, $u\leq 2^w$ where $w=64$ is the word length. In later
chapters we will show how to make efficient universal hashing for large objects
such as vectors and variable length strings.

\begin{exercise}
		Is the truly independent hash function $h: U \to \left[m\right]$ universal?
\end{exercise}
\begin{exercise}
	If a hash function $h: U \to \left[m\right]$ has collision probability $0$, how large must $m$ be?
\end{exercise}
\begin{exercise}
	Let $u\leq m$. Is the identity function $f(x)=x$ a universal hash function $[u]\to [m]$?
\end{exercise}

\subsection{Applications} One of the most classic applications of 
universal hashing is {\em hash tables with chaining}.  We have a set
$S\subseteq U$ of keys that we wish to store so that we can find any
key from $S$ in expected constant time. Let $n=|S|$ and $m\geq n$.  We
now pick a universal hash function $h:U\fct [m]$, and then create an
array $L$ of $m$ lists/chains so that for $i\in [m]$, $L[i]$ is the
list of keys that hash to $i$. Now to find out if a key $x\in U$ is in
$S$, we only have to check if $x$ is in the list $L[h(x)]$.  This
takes time proportional to $1+|L[h(x)]|$ (we add 1 because it takes
constant time to look up the list even if turns out to be empty). 

Assume that $x\not\in S$ and that $h$ is universal. Let $I(y)$ be an indicator variable
which is $1$ if $h(x)=h(y)$ and $0$ otherwise.
Then the expected number of elements in $L[h(x)]$ is
\[E_h[|L[h(x)]|]=E_h\left[\sum_{y\in S}I(y)\right]=
\sum_{y\in S}E_h[I(y)]=\sum_{y\in S}\Pr_h[h(y)=h(x)]\leq n/m\leq 1.\]
The second equality uses {\em linearity of expectation}.
\begin{exercise} 
\begin{itemize}
\item[(a)] What is the 
expected number of elements in $L[h(x)]$ if $x\in S$?
\item[(b)] What bound do you get if $h$ is only 2-approximately universal?
\end{itemize}
\end{exercise}
The idea of hash tables goes back to \cite{Dum56}, and hash tables were the prime
motivation for the introduction of universal hashing in \cite{CW79}. 
For a text book description, see, e.g., \cite[\S 11.2]{CormAlgo}. 

\drop{With universal hashing, 
we can also implement two level
hashing using linear space, supporting lookups in constant time
\cite{FKS84}. For a text book description, see, e.g.,  \cite[\S 11.5]{CormAlgo}.}

A different application is that of assigning a unique {\em signature\/}
$s(x)$ to each key. Thus we want $s(x)\neq s(y)$ for all
distinct keys $x,y\in S$. To get this, we pick a universal hash function 
$s:U\fct [n^3]$. The probability of an error (collision) is calculated as
\[\Pr_s[\exists \{x,y\}\subseteq S: s(x)=s(y) ]\leq 
\sum_{\{x,y\}\subseteq S}\Pr_s[s(x)=s(y)]\leq {n\choose 2}/n^3<1/(2n).\]
The first inequality is a {\em union bound\/}: that the probability of that at least one of multiple events happen is at most the sum of their probabilities.

The idea of signatures is particularly relevant when the keys are
large, e.g., a key could be a whole text document, which then
becomes identified by the small signature. This idea could
also be used in connection with hash tables, letting the list $L[i]$ store
the signatures $s(x)$ of the keys that hash to $i$, that is, $L[i]=\{s(x) | x\in X, h(x) = i\}$. To check if $x$
is in the table we check if $s(x)$ is in $L[h(x)]$. 
\begin{exercise} With $s:U\fct [n^3]$ and $h:U\fct [n]$ independent universal
hash functions, for a given $x\in U\setminus S$, what is the
probability of a {\em false positive\/} when we search $x$, that is, what is
the probability that there is a key $y\in S$ such that
$h(y)=h(x)$ and $s(y)=s(x)$ ? 
\end{exercise}

Below we study implementations of universal hashing.

\subsection{Multiply-mod-prime}
Note that if $m\geq u$, we can just let $h$ be the identity (no
randomness needed) so we may assume that $m<u$. We may also assume that
$m>1$; for if $m=1$, then $[m]=\{0\}$ and then we have the trivial
case where all keys hash to $0$.

The classic universal hash function from  \cite{CW79} is
based on a prime number $p\geq u$. We pick a uniformly random
$a\in [p]_+=\{1,\ldots,p-1\}$ and $b\in [p]=\{0,\ldots,p-1\}$, and define $h_{a,b}:[u]\fct [m]$ by
\begin{equation}\label{eq:def-univ-prime}
h_{a,b}(x)=((ax+b) \bmod p) \bmod m
\end{equation}
Given any distinct $x,y\in [u]\subseteq[p]$, we want to argue that for
random $a$ and $b$ that
\begin{equation}\label{eq:prime}
\Pr_{a\in[p]_+,\;b\in[p]}[h_{a,b}(x)=h_{a,b}(y)]<1/m.
\end{equation}
The strict inequality uses our assumption that $m>1$. Note that with
a truly random hash function into $[m]$, the collision probability is
exactly $1/m$, so we are claiming that $h_{a,b}$ has a strictly better
collision probability.

In most of our proof, we will consider all $a\in [p]$, including
$a=0$. Ruling out $a=0$, will only be used in the end to get the
tight bound from \req{eq:prime}. Ruling out $a=0$ makes sense because
all keys collide when $a=0$.

We need only one basic fact about primes:
\begin{fact}\label{fact:non-zero} If $p$ is prime and
$\alpha,\;\beta\in[p]_+$ then $\alpha\beta \not\equiv 0 \pmod p$.
\end{fact}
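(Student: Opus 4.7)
The statement is essentially Euclid's lemma in contrapositive form: if $p$ is prime and divides neither $\alpha$ nor $\beta$, then it does not divide $\alpha\beta$. My plan is to argue by contraposition, assuming $p \mid \alpha\beta$ with $\alpha \in [p]_+$ and deducing $p \mid \beta$, which contradicts $\beta \in [p]_+$ (since $1 \le \beta \le p-1$).

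The cleanest route uses Bezout's identity. First I would observe that because $p$ is prime its only positive divisors are $1$ and $p$, and because $1 \le \alpha \le p-1$ we have $p \nmid \alpha$, so $\gcd(\alpha, p) = 1$. By Bezout, there exist integers $s, t$ with $s\alpha + tp = 1$. Multiplying through by $\beta$ yields $s(\alpha\beta) + tp\beta = \beta$. Under the assumption $p \mid \alpha\beta$, both terms on the left are divisible by $p$, hence $p \mid \beta$, contradicting $1 \le \beta \le p-1$.

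The only real obstacle is choosing what to take as a black box. Bezout's identity (or equivalently, the existence of modular inverses of nonzero elements mod $p$, or unique factorization in $\mathbb{Z}$) has to come in somewhere; otherwise the statement is not really provable from scratch without reconstructing a chunk of elementary number theory. In a lecture-notes context such as this, I expect it is fine to invoke Bezout directly and keep the argument to a few lines. Alternatively, one could simply note that this is the standard statement ``$\mathbb{Z}/p\mathbb{Z}$ is an integral domain'' and cite it. Either way, the proof is short, and no clever idea beyond applying Euclid's lemma is needed.
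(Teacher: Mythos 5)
Your proof is correct, but note that the paper does not actually prove this statement at all: it is introduced with the words ``we need only one basic fact about primes'' and taken as a known black box (it is Euclid's lemma, equivalently the statement that $\mathbb{Z}/p\mathbb{Z}$ has no zero divisors). So there is no paper proof to match; you are filling a gap the author deliberately left open, and your Bezout argument does so cleanly: $\gcd(\alpha,p)=1$ since $1\le\alpha\le p-1$ and $p$ is prime, so $s\alpha+tp=1$ for some integers $s,t$, and multiplying by $\beta$ shows $p\mid\alpha\beta$ would force $p\mid\beta$, impossible for $\beta\in[p]_+$. One stylistic remark: later in the paper the author proves the closely related Fact~\ref{fct:*a} (that multiplication by $\alpha$ relatively prime to $m$ permutes $[m]$) using unique prime factorization rather than Bezout, so an argument via the fundamental theorem of arithmetic --- $p$ appears in the factorization of $\alpha\beta$, which is the product of the factorizations of $\alpha$ and $\beta$, yet $p$ divides neither factor since both lie in $[p]_+$ --- would be more consistent with the paper's toolkit. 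Either route is a few lines and equally rigorous; your version additionally makes explicit that nonzero residues mod $p$ are invertible, which is the structural reason the 1--1 correspondence in Lemma~\ref{lem:1-1} works.
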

Let $x,y \in [p], x\neq y$ be given. For given pair $(a,b)\in [p]^2$, define $(q,r)\in [p]^2$ by
\begin{eqnarray}
(ax+ b) \bmod p& =& q  \label{eq:x-prime}\\
(ay+ b) \bmod p& =& r.\label{eq:y-prime}
\end{eqnarray}
\begin{lemma}\label{lem:1-1} Equations \req{eq:x-prime} and \req{eq:y-prime} define
a 1-1 correspondence between pairs $(a,b)\in [p]^2$ and pairs $(q,r)\in [p]^2$.
\end{lemma}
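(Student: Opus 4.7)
The plan is to establish the bijection by constructing an explicit two-sided inverse to the map $(a,b)\mapsto(q,r)$ defined by \req{eq:x-prime} and \req{eq:y-prime}. Since domain and codomain are both finite of cardinality $p^2$, showing injectivity alone would suffice, but writing down the inverse explicitly is cleaner and presumably yields the formula needed later to count colliding $(a,b)$-pairs in the proof of \req{eq:prime}.

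Fix distinct $x,y\in[p]$ and suppose we are given $(q,r)\in[p]^2$. I would first subtract \req{eq:y-prime} from \req{eq:x-prime} modulo $p$ to eliminate $b$, yielding $a(x-y)\equiv q-r\pmod p$. Since $x\neq y$ with $x,y\in[p]$, the difference $x-y$ is a nonzero residue modulo $p$. Here I invoke Fact \ref{fact:non-zero}: primality of $p$ guarantees that no nonzero residue is a zero divisor, so $x-y$ has a multiplicative inverse in $[p]_+$. Hence $a\equiv(q-r)(x-y)^{-1}\pmod p$ is uniquely determined in $[p]$, and substituting this back into \req{eq:x-prime} then uniquely determines $b\equiv q-ax\pmod p$ in $[p]$.

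This defines a well-defined map $(q,r)\mapsto(a,b)$ that is manifestly a two-sided inverse of $(a,b)\mapsto(q,r)$, completing the proof. The only subtle point, and hence the main (albeit minor) obstacle, is recognizing that Fact \ref{fact:non-zero} is exactly what certifies invertibility of $x-y$ modulo $p$: if $\alpha=x-y$ had no inverse in $[p]_+$, then multiplication by $\alpha$ on $[p]_+$ would fail to be injective and force some nonzero $\beta$ to satisfy $\alpha\beta\equiv 0\pmod p$, contradicting the fact.
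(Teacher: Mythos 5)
Your proof is correct, and its computational core---subtracting \req{eq:y-prime} from \req{eq:x-prime} to eliminate $b$, then using Fact~\ref{fact:non-zero} to pin down $a$, and finally $b$---is the same as the paper's. The two arguments diverge only in how ``at most one solution'' is upgraded to ``exactly one.'' The paper stops at injectivity of $(a,b)\mapsto(q,r)$ and concludes bijectivity by counting the $p^2$ pairs on each side; this lets it get away with using Fact~\ref{fact:non-zero} purely as a \emph{uniqueness} tool. You instead construct the inverse map explicitly, which forces you to establish an extra ingredient the paper never needs: that $x-y$ is invertible modulo $p$. You supply this correctly, deriving existence of $(x-y)^{-1}$ from Fact~\ref{fact:non-zero} by a pigeonhole argument on $[p]_+$---but note that this is the same finiteness/counting step as in the paper, just pushed down from the level of pairs in $[p]^2$ to the level of single residues in $[p]_+$. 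What your version buys is the closed-form inverse $a\equiv(q-r)(x-y)^{-1}$, $b\equiv q-ax$, which is indeed reusable later; what the paper's version buys is slightly less machinery, since it never has to argue that $\mathbb{Z}_p$ has multiplicative inverses. One cosmetic remark: ``manifestly a two-sided inverse'' deserves the one-line check that your $(a,b)$ actually satisfies \req{eq:y-prime} as well as \req{eq:x-prime} (it does: $ay+b=q-a(x-y)\equiv r$), since $b$ was chosen only to satisfy the first equation.
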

\begin{proof}
For a given pair $(r,q)\in [p]^2$, we will show that there is at most
one pair $(a,b)\in [p]^2$ satisfying \req{eq:x-prime} and \req{eq:y-prime}.
Subtracting 
\req{eq:x-prime} from \req{eq:y-prime} modulo $p$, we get
\begin{equation}\label{eq:prime-diff}
(ay+b)-(ax+ b) \equiv a(y-x) \equiv r-q\pmod p,
\end{equation}
We claim that there
is at most one $a$ satisfying \req{eq:prime-diff}. 
Suppose
there is another $a'$ satisfying \req{eq:prime-diff}.
Subtracting the equations with $a$ and $a'$, we get
\[(a-a')(y-x) \equiv 0 \ppmod p ,\]
but since $a-a'$ and $y-x$ are
both non-zero modulo $p$, this contradicts Fact \ref{fact:non-zero}. There
is thus at most one $a$ satisfying \req{eq:prime-diff} for
given $(q,r)$. With this $a$, we need $b$ to satisfy \req{eq:x-prime}, 
and this determines $b$ as
\begin{equation}
b=(q-ax) \bmod p.
\end{equation}
Thus, for each pair 
$(q,r)\in [p]^2$, there is at most one pair $(a,b)\in[p]^2$ satisfying \req{eq:x-prime} and \req{eq:y-prime}. On the other
hand, \req{eq:x-prime} and \req{eq:y-prime} define
a unique pair $(q,r)\in [p]^2$ for each pair $(a,b)\in[p]^2$.
We have
$p^2$ pairs of each kind, so the correspondence must be 1-1.
\end{proof}
Since $x\neq y$, by Fact \ref{fact:non-zero}, 
\begin{equation}\label{eq:zero}
r=q\iff a=0.
\end{equation}
Thus, when we pick $(a,b)\in [p]_+\times [p]$, we
get $r\neq q$.

Returning to the proof of \req{eq:prime}, we get a collision
$h_{a,b}(x)=h_{a,b}(y)$ if and only if $q\bmod m = r\bmod m$. Let us
fix $q$ and set $i= q\bmod m$. There are at most 
$\ceil{p/m}$ values $r\in [p]$ with $r\bmod m=i$ and  one of them
is $r=q$. Therefore, the number of $r\in[p]\setminus \{q\}$ with
$r\bmod m=i=q\bmod m$ is at most $\ceil{p/m}-1\leq (p+m-1)/m-1=(p-1)/m$.
However, there are values of $i'\in[m]$ with only $\floor{p/m}$ 
values of $q'\in [p]$ with $q'\bmod m=i'$, and then
the number of $r'\in[p]\setminus \{q'\}$ with
$r'\bmod m=i'=q'\bmod m$ is $\floor{p/m}-1<\ceil{p/m}-1$.
Summing over all $p$ values of $q\in [p]$, 
we get that the number of $r\in[p]\setminus \{q\}$ 
with $r\bmod m=i=q\bmod m$ is strictly less than $p(\ceil{p/m}-1)\leq 
p(p-1)/m$. Then our 1-1 correspondence implies that there are strictly less than $p(p-1)/m$ collision pairs $(a,b)\in [p]_+\times [p]$. Since each
of the $p(p-1)$ pairs from $[p]_+\times [p]$ are equally likely, we
conclude that the collision probability is strictly below $1/m$, as claimed
in \req{eq:prime}.

\begin{exercise} Suppose we for our hash function also 
consider $a=0$, that is, for random $(a,b)\in
[p]^2$,  we define
the hash function $h_{a,b}:[p]\fct[m]$ by 
\[h_{a,b}(x)=((ax+b)\bmod p)\bmod m.\]
\begin{itemize}
\item[(a)] Show that this function may not be universal.
\item[(b)] Prove that it is always 2-approximately universal, that is, for
any distinct $x,y\in [p]$,
\[\Pr_{(a,b)\in[p]^2}[h_{a,b}(x)=h_{a,b}(y)]<  2/m.\]
\end{itemize}
\end{exercise}

\subsubsection{Implementation for 64-bit keys}\label{sec:Mersenne-impl}
Let us now consider the implementation of our hashing scheme 
\[h(x)=((ax+b) \bmod p) \bmod m\]
for the typical case of 64-bit keys in a standard imperative
programming language such as C. Let's say the hash values are $20$ bits,
so we have $u=2^{64}$ and $m=2^{20}$.

Since $p>u=2^{64}$, we generally need
to reserve more than 64 bits for $a\in[p]_+$, so the product $ax$
has more than 128 bits. To compute $ax$, we now have the issue that
multiplication of $w$-bit numbers automatically discards overflow,
returning only the $w$ least significant bits of the product. However,
we can get the product of 32-bit numbers, representing them as 64-bit numbers, 
and getting the full 64-bit result. We need at least 6 such 64-bit 
multiplications to compute $ax$.

Next issue is, how do we compute $ax \bmod p$? For 64-bit numbers we
have a general mod-operation, though it is rather slow, and here we
have more than 128 bits.

An idea from \cite{CW79} is to let $p$ be a Mersenne prime, that is, a prime
of the form $2^q-1$. Useful examples of such Mersenne primes are $2^{61}-1$ and
$2^{89}-1$.
The point in using a Mersenne prime $p=2^q-1$ is that
\begin{equation}\label{eq:Mersenne}
x\equiv x\bmod 2^{q}+\floor{x/2^q}\ppmod {p}.
\end{equation}
\begin{exercise} Prove that \req{eq:Mersenne} holds. Hint: 
Argue that $x \bmod 2^q + \floor{x/2^q}=x-\floor{x/2^q}p$.
\end{exercise}
Using \req{eq:Mersenne} gives us the following
C-code to compute $y=x\bmod p$:
\begin{verbatim}
y=(x&p)+(x>>q);
if (y>=p) y-=p;
\end{verbatim}
\begin{exercise} Argue that the above code sets $y=x\bmod p$ assuming that 
$x$, $y$, $p$, and $q$ are represented in the same unsigned integer type, 
and that $x<2^{2q}$. In particular, argue we can apply the above code
if $x=x_1x_2$ where $x_1,x_2\in [p]$.
\end{exercise}

\begin{exercise} Assuming a language like C supporting 64-bit multiplication (discarding overflow beyond 64 bits), addition, shifts and bit-wise Boolean operations, but no general
mod-operation, sketch the code to compute $((ax+b)\bmod p)\bmod m$
with $p=2^{89}-1$ and $m=2^{20}$. You should assume that both your
input and your output is arrays of unsigned 32-bit numbers, most
significant number first.
\end{exercise}

\subsection{Multiply-shift}\label{sec:mult-shift-univ}
We shall now turn to a truly practical universal hashing scheme proposed
by Dietzfelbinger et al.~\cite{DHKP97}, yet ignored by most text books. It generally addresses hashing
from $w$-bit integers to $\ell$-bit integers. We pick a uniformly random odd
$w$-bit integer $a$, and then we compute $h_a:[2^w]\fct[2^\ell]$,
as
\begin{equation}\label{eq:def-univ-shift}
h_a(x)=\floor{(ax\bmod 2^w)/2^{w-\ell}}
\end{equation}
This scheme gains an order of magnitude in speed over the scheme from \req{eq:def-univ-prime}, exploiting operations that are fast on standard computers.
Numbers are stored as bit strings, with the least significant bit to the
right. Integer division by a power of two is thus accomplished by a 
right shift. For hashing 64-bit integers, we further exploit
that $64$-bit multiplication
automatically discards overflow, which is the same as multiplying 
modulo $2^{64}$. Thus, with $w=64$, we end up with the following
C-code:
\begin{verbatim}
#include <stdint.h> //defines uint64_t as unsigned 64-bit integer.
uint64_t hash(uint64_t x, uint64_t l, uint64_t a) {
  // hashes x universally into l<=64 bits using random odd seed a. 
  return (a*x) >> (64-l);
}
\end{verbatim}
\drop{As a convenient notation, for any
bit-string $z$ and integers $j>i\geq 0$, $z[i,j)=z[i,j-1]$ denotes the
  number represented by bits $i,\ldots,j-1$ (bit $0$ is the least
  significant bit, which confusingly, happens to be rightmost in the standard representation), so
\[z[i,j)=\floor{(z\bmod 2^j)/2^i}\textnormal{ and }h_a(x)=(ax)[w-\ell,w).\]}
This scheme is many times faster and simpler to implement than
the standard multiply-mod-prime scheme, but the analysis is more subtle.

It is convenient to think of the bits of a number as indexed
with bit 0 the least significant bit. The scheme is simply
extracting bits $w-\ell,\ldots,w-1$ from the product $ax$, as illustrated below.
\begin{center}
\leavevmode
\epsfig{file=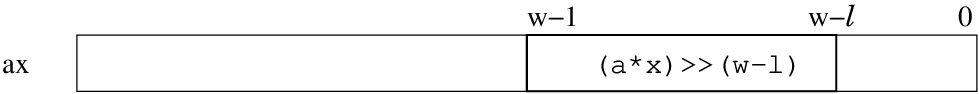,width=5in}
\end{center}
We will prove that multiply-shift is 2-approximately universal, that is, for $x\neq y$,
\begin{equation}\label{eq:mult-shift}
\Pr_{\textnormal{odd }a\in[2^w]}[h_a(x)= h_a(y)]\leq 2/2^\ell=2/m.
\end{equation}
We have $h_a(x)= h_a(y)$ if and only if $ax$ and $ay=ax+a(y-x)$ agree
on bits $w-\ell,\ldots,w-1$. This match requires that bits
$w-\ell,\ldots,w-1$ of $a(y-x)$ are either all $\tto$s or all $\ttl$s.
More precisely, if we get no carry from bits $0,\ldots,w-\ell$ when we
add $a(y-x)$ to $ax$, then $h_a(x)= h_a(y)$ exactly when bits
$w-\ell,\ldots,w-1$ of $a(y-x)$ are all $\tto$s. On the other hand, if we
get a carry $\ttl$ from bits $0,\ldots,w-\ell$ when we add $a(y-x)$ to $ax$,
then $h_a(x)= h_a(y)$ exactly when bits $w-\ell,\ldots,w-1$ of $a(y-x)$ are
all $\ttl$s.
To prove \req{eq:mult-shift}, it thus suffices to prove that
the probability that bits
$w-\ell,\ldots,w-1$ of $a(y-x)$ are all $\tto$s or all $\ttl$s 
is at most $2/2^{\ell}$.

We will exploit that any odd number $z$ is relatively prime
to any power of two:
\begin{fact}\label{fact:non-zero-relative}
If $\alpha$ is odd and $\beta\in [2^q]_+$ then $\alpha\beta \not\equiv 0 \pmod {2^q}$.
\end{fact}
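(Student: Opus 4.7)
The plan is to argue by contradiction: suppose $\alpha\beta \equiv 0 \pmod{2^q}$, so that $2^q \mid \alpha\beta$, and derive that $2^q$ must divide $\beta$ by itself, contradicting $1 \leq \beta \leq 2^q-1$.

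The crux is that $\alpha$ odd means $2 \nmid \alpha$, and since $2$ is the only prime divisor of $2^q$, this upgrades to $\gcd(\alpha, 2^q) = 1$ for every $q \geq 1$. With coprimality in hand, two equivalent routes close the argument. The first is Euclid's lemma for coprime divisors: from $2^q \mid \alpha\beta$ and $\gcd(\alpha, 2^q) = 1$ we conclude $2^q \mid \beta$. The second, perhaps slicker, is via $2$-adic valuations: writing $v_2(n)$ for the largest exponent with $2^{v_2(n)} \mid n$, we have $v_2(\alpha\beta) = v_2(\alpha) + v_2(\beta) = v_2(\beta)$ since $v_2(\alpha) = 0$, so $v_2(\alpha\beta) \geq q$ forces $v_2(\beta) \geq q$, i.e., $2^q \mid \beta$.

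Either way the contradiction with $\beta \in [2^q]_+$ is immediate. There is no significant obstacle; this is the natural analogue of Fact~\ref{fact:non-zero} with ``$p$ prime'' replaced by ``$\alpha$ coprime to $2^q$,'' the latter being precisely what oddness of $\alpha$ supplies. If one wanted to avoid invoking Euclid's lemma or valuations altogether, a direct induction on $q$ also works: for $q=1$, $\alpha\beta$ is odd (product of two odd numbers, noting that $\beta \in [2]_+ = \{1\}$ is odd), and for the step one writes $\alpha\beta = \alpha\beta' \cdot 2^{v_2(\beta)}$ with $\beta'$ odd and applies the base case to $\alpha\beta'$.
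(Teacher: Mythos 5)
Your proof is correct. The paper actually states this Fact without proof, so there is nothing to compare against directly; your argument --- oddness of $\alpha$ gives $\gcd(\alpha,2^q)=1$, so $2^q\mid\alpha\beta$ would force $2^q\mid\beta$, impossible for $\beta\in[2^q]_+=\{1,\ldots,2^q-1\}$ --- is the standard one, and it is exactly the unique-factorization/coprimality reasoning the paper does spell out later when proving the more general Fact~\ref{fct:*a} about $\alpha$ relatively prime to $m$. Any of your three routes (Euclid's lemma, $2$-adic valuation, or reducing to the odd-times-odd base case) closes the argument; no gaps.
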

Define $b$ such that $a=1+2b$. Then $b$ is uniformly distributed in
$[2^{w-1}]$.
Moreover, define $z$ to be the odd number satisfying $(y-x)=z 2^i $. Then
\[a(y-x)=z 2^i+bz 2^{i+1}.\]
\begin{center}
\leavevmode
\epsfig{file=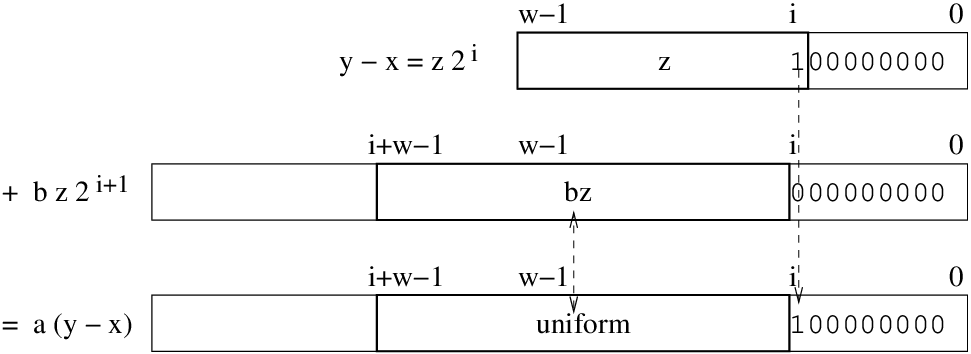,width=5in}
\end{center}
Now, we prove that $bz \bmod 2^{w-1}$ must be uniformly distributed in $[2^{w-1}]$. First, note that there is 
a 1-1 correspondence between the $b\in [2^{w-1}]$ and the products
$bz \bmod 2^{w-1}$; for if there
were another $b'\in [2^{w-1}]$ with $b'z\equiv bz\pmod{ 2^{w-1}}\iff
z(b'-b)\equiv 0\pmod{ 2^{w-1}}$, then this
would contradict Fact \ref{fact:non-zero-relative} since $z$ is odd. But then the uniform distribution on $b$ implies that
$bz \bmod 2^{w-1}$ is uniformly distributed. We conclude
that $a(y-x)=z 2^i+bz 2^{i+1}$ has \tto{} in bits $0,\ldots,i-1$, \ttl{} in bit
$i$, and a uniform distribution on bits $i+1,\ldots,i+w-1$. 

We have a collision $h_a(x)=h_a(y)$ if $ax$ and $ay=ax+a(y-x)$ are
identical on bits $w-\ell,\ldots,w-1$. The two are always different in bit
$i$, so if $i\geq w-\ell$, we have $h_a(x)\neq h_a(y)$ regardless of
$a$. However, if $i<w-\ell$, then because of carries, we could have
$h_a(x)=h_a(y)$ if bits $w-\ell,\ldots,w-1$ of $a(y-x)$ are either all \tto{}s, or all
\ttl{}s. Because of the uniform distribution, either event happens
with probability $1/2^\ell$, for a combined probability bounded by $2/2^\ell$.
This completes the proof of \req{eq:mult-shift}.
\begin{exercise} Why is it important that $a$ is odd?  Hint: consider the case where $x$ and $y$ differ only in the most significant bit. 
\end{exercise}
\begin{exercise}
Does there exist a key $x$ such that $h_a(x)$ is the same regardless of the random odd number $a$?
	If so, can you come up with a real-life application where this is a disadvantage?
\end{exercise}
\section{Strong universality}\label{sec:strong-universal}
We will now consider {\em strong universality} \cite{CW81}. For
$h:[u]\fct[m]$, we consider {\em pairwise events} of the form that
for given distinct keys $x,y\in[u]$ and possibly
non-distinct hash values $q,r\in[m]$, we have $h(x)=q$ and
$h(y)=r$. We say a random hash function $h:[u]\fct[m]$ is {\em
  strongly universal\/} if the probability of every pairwise event is
$1/m^2$. We note that if $h$ is strongly universal, it is also universal
since 
\[\Pr[h(x)=h(y)]=\sum_{q\in[m]}\Pr[h(x)=q\,\wedge
  \,h(y)=q]=m/m^2=1/m.\]
\drop{From Lemma \ref{lem:1-1}, we immediately get
\begin{observation}
For prime $p$ and $(a,b)$ uniform in $[p]^2$, the hash function
$h_{a,b}:[p]\to[p]$ defined by $h_{a,b}(x)=(ax+b)\bmod p$ is strongly
universal.
\end{observation}
We note the following equivalent formulation of strong universality:}
\begin{observation}
An equivalent definition of strong universality is that each key is hashed
uniformly into $[m]$, and that every two distinct keys are hashed independently.
\end{observation}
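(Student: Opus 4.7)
The plan is to prove the equivalence by showing both implications separately, since each one is essentially a one-line calculation once set up correctly. Let ``SU'' denote the definition given in the text (every pairwise event $\{h(x)=q,\,h(y)=r\}$ has probability $1/m^2$) and let ``UI'' denote the alternative (every $h(x)$ is uniform on $[m]$, and $h(x),h(y)$ are independent for $x\neq y$).

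For the direction SU $\Rightarrow$ UI, I would first recover uniformity of the marginal $h(x)$ by fixing some $y\neq x$ and writing
\[
\Pr[h(x)=q]=\sum_{r\in[m]}\Pr[h(x)=q\,\wedge\,h(y)=r]=\sum_{r\in[m]}\frac{1}{m^2}=\frac{1}{m},
\]
which is the same trick already used in the excerpt to deduce universality from strong universality. Independence then drops out immediately: for any distinct $x,y$ and any $q,r\in[m]$,
\[
\Pr[h(x)=q\,\wedge\,h(y)=r]=\frac{1}{m^2}=\Pr[h(x)=q]\cdot\Pr[h(y)=r].
\]

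For the converse UI $\Rightarrow$ SU, the calculation runs in reverse: pairwise independence together with uniform marginals gives
\[
\Pr[h(x)=q\,\wedge\,h(y)=r]=\Pr[h(x)=q]\cdot\Pr[h(y)=r]=\frac{1}{m}\cdot\frac{1}{m}=\frac{1}{m^2}
\]
for any distinct $x,y$ and any $q,r$, which is precisely SU.

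The only genuinely fiddly point is the trivial edge case $|U|=1$, where one cannot choose a second key $y\neq x$ to recover the marginal from the pairwise events; in that case SU is vacuous and we simply declare $h$ to be uniform by fiat. Apart from that, the argument is purely a rewriting of definitions, so I do not anticipate any substantive obstacle.
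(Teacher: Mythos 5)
Your proposal is correct and follows essentially the same route as the paper: recover the uniform marginal by summing the pairwise event probabilities over one coordinate, then verify independence from the product formula (the paper phrases this via conditional probabilities, which is equivalent), and run the computation in reverse for the converse. The remark about the degenerate case $|U|=1$ is a reasonable extra care that the paper omits.
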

\begin{proof}
First assume strong universality and consider distinct keys $x,y\in
U$.  For any hash value $q\in [m]$, $\Pr[h(x)=q]=\sum_{r\in[m]}
\Pr[h(x)=q\,\wedge \,h(y)=r]=m/m^2=1/m$, so $h(x)$ is uniform in
$[m]$, and the same holds for $h(y)$. Moreover, for any hash value $r\in[m]$,
\begin{align*}
\Pr[h(x)=q\mid h(y)=r]&=\Pr[h(x)=q\,\wedge
  \,h(y)=r]/\Pr[h(y)=r]\\
&=(1/m^2)/(1/m)=1/m=\Pr[h(x)=q]\textnormal,
\end{align*}
so $h(x)$ is independent of $h(y)$. For the converse direction, when $h(x)$ and $h(y)$ are independent, 
$\Pr[h(x)=q\,\wedge \,h(y)=r]=\Pr[h(x)=q]\cdot\Pr[h(y)=r]$, and 
when $h(x)$ and $h(y)$ are uniform, $\Pr[h(x)=q]=\Pr[h(y)=r]=1/m$,
so $\Pr[h(x)=q]\cdot\Pr[h(y)=r]=1/m^2$.
\end{proof}
Emphasizing the independence, strong universality is also called {\em 2-independence}, as it concerns a pair of \emph{two} events.

\begin{exercise}
	Generalize $2$-independence. What is $3$-independence? $k$-independence?
\end{exercise}

As for universality, we may accept some relaxed notion of strong universality.
\begin{definition}
We say a random hash function $h:U\fct[m]$ is {\em $c$-approximately strongly universal\/} if
\begin{enumerate}
\item $h$ is \emph{$c$-approximately uniform}, meaning for every $x\in U$ and for every hash value $q\in[m]$, we have $\Pr[h(x)=q]\leq c/m$
  and
\item every pair of distinct keys hash independently.
\end{enumerate}
\end{definition}
\begin{exercise}
If $h$ is $c$-approximately strongly universal, what is an upper bound on the 
pairwise event probability, $$\Pr[h(x)=q\land h(y)=r]?$$
\end{exercise}
\begin{exercise} Argue that if $h:U\fct [m]$ is $c$-approximately strongly universal, then 
$h$ is also $c$-approximately universal.
\end{exercise} 
\begin{exercise}
	Is Multiply-Shift $c$-approximately strongly universal for any constant $c$?
\end{exercise}
\subsection{Applications}
One very important application of strongly universal hashing is
{\em coordinated sampling}, which is crucial to the handling of Big Data and
machine learning. The basic idea is that we based on small samples can
reason about the similarity of huge sets, e.g., how much they
have in common, or how different they are.

First we consider sampling from a single set $A\subseteq U$ using
a strongly universal hash function $h:U\fct [m]$ and a threshold
$t\in \{0,\ldots,m\}$. We now sample $x$ if $h(x)<t$, which by uniformity 
happens with probability
$t/m$ for any $x$.  Let $S_{h,t}(A)=\{x\in A\mid h(x)<t\}$ denote the
resulting sample from $A$. Then, by linearity of expectation, 
$E[|S_{h,t}(A)|]=|A|\cdot t/m$. Conversely, this means that
if we have $S_{h,t}(A)$, then we can estimate $|A|$ as
$|S_{h,t}(A)|\cdot m/t$.

We note that the universality from Section \ref{sec:universal} does
not in general suffice for any kind of sampling. If we, for example,
take the multiplication-shift scheme from Section
\ref{sec:mult-shift-univ}, then we always have $h(0)=0$, so $0$ will
always be sampled if we sample anything, that is, if $t>0$. 

The important application is, however, not the sampling from a single
set, but rather the sampling from different sets $B$ and $C$ so that
we can later reason about the similarity, estimating the sizes of their union $B\cup C$ and intersection
$B\cap C$. 

Suppose we for two different sets $B$ and $C$ have found the samples
$S_{h,t}(B)$ and $S_{h,t}(C)$. Based on these we can compute the
sample of the union as the union of the samples, that is,
$S_{h,t}(B\cup C)=S_{h,t}(B)\cup S_{h,t}(C)$. Likewise, we can compute
the sample of the intersection as $S_{h,t}(B\cap C)=S_{h,t}(B)\cap
S_{h,t}(C)$. We can then estimate the size of the union and intersection
multiplying the corresponding sample sizes by $m/t$.

The crucial point here is that the sampling from different sets can be
done in a distributed fashion as long as a fixed $h$ and $t$ is
shared, {\em coordinating the sampling\/} at all locations. This is used,
e.g., in machine learning, where we can store the samples of many
different large sets. When a new set comes, we sample it, and compare
the sample with the stored samples to estimate which other set it has most in common
with.
Another cool application of coordinated sampling is on the Internet where all routers can store samples of
the packets passing through \cite{DuffieldG08}. If a packet is sampled, it is sampled by all
routers that it passes, and this means that we can follow the packets route through the network. If the routers did not use coordinated sampling, the chance that the same packet would be sampled at multiple routers would be very small.

\begin{exercise}
	Given $S_{h,t}(B)$ and $S_{h,t}(C)$, how would you estimate
the size of the symmetric difference $(B\setminus C)\cup (C\setminus B)$?
\end{exercise}

Below, in our mathematical reasoning, we only talk about the sample
$S_{h,t}(A)$ from a single set $A$. However, as described above, in many applications,
$A$ represent a union $B\cup C$ or intersection $B\cap C$ of
different sets $B$ and $C$.

To get a fixed sampling probability $t/m$ for each $x\in U$, we only
need that $h:U\to[m]$ is uniform. This ensures that the estimate
$|S_{h,t}(A)|\cdot m/t$ of $|A|$ is unbiased, that is,
$\E[|S_{h,t}(A)|\cdot m/t]=|A|$.  The reason that we also want the pairwise independence of strong 
universality is that we want $|S_{h,t}(A)|$ to be concentrated around
its mean $|A|\cdot t/m$ so that we can trust the estimate
$|S_{h,t}(A)|\cdot m/t$ of $|A|$.

For $a\in A$, let $X_a=[h(a)<t]$, $X=\sum_{a\in A}X_a$, and $\mu=\E[X]$.
Then $X=|S_{h,t}(A)|$, but the reasoning below applies when $X_a$ is
{\em any\/} 0-1 indicator variable that depends only $h(a)$ (in this context, $t$ is
just a constant).

Because $h$ is strongly universal, for any distinct $a,b\in A$, we
have that $h(a)$ and $h(b)$ are independent, and hence so are $X_a$ and $X_b$. Therefore
$X=\sum_{a\in A}X_a$ is the sum of pairwise independent 0-1 variables. Now the following
concentration bound applies to $X$.
\begin{lemma} Let $X=\sum_{a\in A} X_a$ where the $X_a$ are pairwise independent 0-1 variables.  Let $\mu=\E[X]$. Then $\Var(X)\leq\mu$ and for any $q>0$,
\begin{equation}\label{eq:sqrt-mu}
\Pr[|X-\mu|\geq q\sqrt \mu]\leq 1/q^2.
\end{equation}
\end{lemma}
\begin{proof}
  For $a\in A$, let $p_a=\Pr[X_a]$. Then $E[X_a]=p_a$ and $\Var[X_a]=p_a(1-p_a)\leq p_a=\E[X_a]$. The variance of a sum of pairwise independent variables is the sum of their variances,
  so
  \[\Var[X]=\sum_{a\in A}\Var[X_a]\leq \sum_{a\in A}\E[X_a]=\mu.\]
By definition, the standard
deviation of $X$ is $\sigma=\sqrt{\Var[X]}$, and by Chebyshev's
inequality (see, e.g., \cite[Theorem 3.3]{MR95}), for any $q>0$,
\begin{equation}
\Pr[|X-\mu|\geq q\sigma]\leq 1/q^2.
\end{equation}
This implies \req{eq:sqrt-mu} since $\sigma\leq\sqrt\mu$.
\end{proof}
\begin{exercise} Suppose that $|A|=100,\!000,\!000$ and $p=t/m=1/100$. Then $E[X]=\mu=1,\!000,\!000$. Give an upper bound for the
probability that $|X-\mu|\geq 10,\!000$. These numbers correspond to a 1\% 
sampling rate and a 1\% error.
\end{exercise}
The bound from \req{eq:sqrt-mu} is good for predicting range of outcomes, but often what we have is an experiment giving us a concrete value for our random variable $X$, and now
we want some confidence interval for the unknown mean $\mu$ that we are trying
to estimate.
\begin{lemma}\label{lem:confidence} Let $X$ be a random variable and $\mu=\E[X]$. Suppose 
  \req{eq:sqrt-mu} holds, that is, $\Pr[|X-\mu|\geq q\sqrt \mu]\leq
  1/q^2$ for any given $q$. Then for any given error probability $P$,
  the following holds with probability at least $1-P$,
\begin{equation}\label{eq:confidence}
  X-\sqrt{2X/P}\ <\ \mu\ <\  \max\{8/P,\ X+\sqrt{4X/P}\}.
\end{equation}
\end{lemma}
\begin{proof}
We will show that each of the two inequalities fail with probability at most $P/2$.
First we address the lower-bound, which is the simplest.
From \req{eq:sqrt-mu} with $q=\sqrt{2/P}$, we get that 
\[\Pr[X\geq \mu+\sqrt{2\mu/P}]\leq P/2.\]
However,
\[\mu\leq X-\sqrt{2X/P}\implies \mu\leq X- \sqrt{2\mu/P} \iff X\geq \mu+\sqrt{2\mu/P}\textnormal,.\]
so we conclude that 
\[\Pr[\mu\leq X-\sqrt{2X/P}]\leq \Pr[X\geq \mu+\sqrt{2\mu/P}]\leq P/2.\]
We now address the upper-bound in \req{eq:confidence}. 
From \req{eq:sqrt-mu} with $q=\sqrt{2/P}$, we get
\[\Pr[X\leq \mu-\sqrt{2\mu/P}]\leq P/2.\]
Suppose $\mu\geq 8/P$. Then $\sqrt{2\mu/P}\leq \mu/2$, so $X\leq \mu/2$ implies
$X\leq \mu-\sqrt{2\mu/P}$. However, $X>\mu/2$ and
$\mu \geq X+2\sqrt{X/P}$ implies $\mu \geq X+2\sqrt{\mu/(2P)}=X+\sqrt{2\mu/P}$, hence
$X\leq \mu-\sqrt{2\mu/P}$. Thus we conclude that $\mu\geq\max\{8/P,\ X+2\sqrt{X/P}\}$
implies $X\leq \mu-\sqrt{2\mu/P)}$, hence that
\[\Pr[\mu\geq\max\{8/P,\ X+2\sqrt{X/P}\}]\leq \Pr[X\leq \mu-\sqrt{2\mu/P}]\leq P/2.\]
This completes the proof that \req{eq:confidence} is satisfied with probability $1-P$.
\end{proof}
\begin{exercise}
In science we often want confidence $1-P=95\%$. Suppose we run an experiment
yielding $X=1000$ in Lemma \ref{lem:confidence}. What confidence interval do you get for the underlying mean?
\end{exercise}

\subsection{Multiply-mod-prime}\label{sec:strong-multiply-mod-prime}
The classic strongly universal hashing scheme is a multiply-mod-prime scheme.
For some prime $p$, uniformly at random we pick $(a,b)\in[p]^2$ and define
$h_{a,b}:[p]\fct[p]$ by 
\begin{equation}\label{eq:strong-univ}
h_{a,b}(x)=(ax+b)\bmod p.
\end{equation}
To see that this is strongly universal, consider distinct keys 
$x,y\in[p]$ and possibly
non-distinct hash values $q,r\in[p]$, $h_{a,b}(x)=q$ and $h_{a,b}(x)=r$.
This is exactly as in \req{eq:x-prime} and \req{eq:y-prime}, and
by Lemma \ref{lem:1-1}, we have a 1-1 correspondence between 
pairs $(a,b)\in[p]\times[p]$ and pairs $(q,r)\in [p]^2$.
Since $(a,b)$ is uniform in $[p]^2$ it follows that $(q,r)$ is
uniform in $[p]^2$, hence that the pairwise event
$h_{a,b}(x)=q$ and $h_{a,b}(x)=r$ happens with probability $1/p^2$.
\begin{exercise}\label{ex:prime-strong} For prime $p$, let $m,u\in[p]$.
    For uniformly random $(a,b)\in [p]^2$,  define
the hash function $h_{a,b}:[u]\fct[m]$ by 
\[h_{a,b}(x)=((ax+b)\bmod p)\bmod m.\]
The $\bmod\, m$ operation preserves the pairwise independence of hash values.
\begin{itemize}
\item[(a)] Argue for any $x\in[p]$ and $q\in[m]$ that 
  \begin{equation}\label{eq:prime-uniform}
  (1-m/p)/m< \Pr[h_{a,b}(x)=q]<(1+m/p)/m.
  \end{equation}
  In particular, it follows that $h_{a,b}$ is 2-approximately strongly universal.
\item[(b)] In the universal multiply-mod-prime hashing from Section
\ref{sec:universal}, we insisted on $a\neq 0$, but now we
consider all $a\in[p]$. Why this difference?
\end{itemize}
\end{exercise}
  For a given $u$ and $m$, it follows from \req{eq:prime-uniform} that we can get multiply-mod-prime $h_{a,b}:[u]\fct[m]$ arbitrarily
close to uniform by using a large enough prime $p$. In practice,
we will therefore often think of $h_{a,b}$ as strongly universal, ignoring the
error $m/p$.

\subsection{Multiply-shift}\label{sec:strong-mult-shift}
We now present a simple generalization from \cite{Die96} of the
universal multiply-shift scheme from Section \ref{sec:universal} that
yields strong universality. As a convenient notation, for any
bit-string $z$ and integers $j>i\geq 0$, $z[i,j)=z[i,j-1]$ denotes the
  number represented by bits $i,\ldots,j-1$ (bit $0$ is the least
  significant bit, which confusingly, happens to be rightmost in the standard representation), so
\[z[i,j)=\floor{(z\bmod 2^j)/2^i}.\]
To get strongly universal hashing $[2^w]\fct [2^\ell]$, we may pick any 
$\ol w\geq w+\ell -1$. For any pair 
$(a,b)\in [\ol w]^2$, we define $h_{a,b}:[2^w]\fct [2^\ell]$ by
\begin{equation}\label{eq:shift-strong}
h_{a,b}(x)=(ax+b)[\ol w-\ell,\ol w).
\end{equation}
As for the universal multiply shift, we note that the scheme of \req{eq:shift-strong} is easy to implement with convenient parameter choices, e.g., with
$\ol w=64$, $w=32$ and $\ell=20$, we get the C-code:
\begin{verbatim}
#include <stdint.h> 
// defines uint32/64_t as unsigned 32/64-bit integer.
uint32_t hash(uint32_t x, uint32_t l, uint64_t a, uint64_t b) {
  // hashes 32-bit x strongly universally into l<=32 bits 
  // using the random seeds a and b. 
  return (a*x+b) >> (64-l);
}
\end{verbatim}
The above code uses 64-bit multiplication like in Section \ref{sec:mult-shift-univ}.
However, in  Section \ref{sec:mult-shift-univ}, we got universal hashing
from 64-bit keys to up to 64-bit hash values. Here we get
strongly universal hashing from 32-bit keys to up to 32-bit hash values.
For strongly universal hashing of 64-bit keys, we can use the
pair-multiply-shift that will be introduced in Section \ref{sec:pair-mult-shift}, and to get up to 64-bit hash values, we can use the concatenation of
hash values that will be introduced in Section \ref{sec:large-range}.
Alternatively, if we have access to fast 128-bit multiplication, then
we can use it to hash directly from 64-bit keys to 64-bit hash values.

We will now prove that the scheme from \req{eq:shift-strong} is strongly
universal. In the proof we will reason a lot about uniformly
distributed variables, e.g., if $X\in [m]$ is uniformly distributed
and $\beta$ is a constant integer, then $(X+\beta)\bmod m$ is also
uniformly distributed in $[m]$.  More interestingly, we have
\begin{fact}\label{fct:*a} Consider two positive integers $\alpha$ and $m$ 
that are relatively prime, that is, $\alpha$ and $m$ have no common prime
factor. If $X$ is uniform
in $[m]$, then $(\alpha X)\bmod m$ is also uniformly distributed 
in $[m]$. Important cases are (a) if $\alpha<m$ and $m$ is prime, and (b) if
$\alpha$ is odd and $m$ is a power of two.
\end{fact}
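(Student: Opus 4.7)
The plan is to reduce the claim to a bijection argument: if I can show that the map $\varphi: [m]\to[m]$ defined by $\varphi(x)=(\alpha x)\bmod m$ is a bijection, then since $X$ is uniform on $[m]$ it follows immediately that $\varphi(X)=(\alpha X)\bmod m$ is uniform on $[m]$ (a bijection sends the uniform distribution to the uniform distribution). Because $[m]$ is finite and $\varphi$ maps $[m]$ to itself, it suffices to establish injectivity.

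For injectivity, I would assume $(\alpha x)\bmod m=(\alpha y)\bmod m$ for some $x,y\in[m]$, so $\alpha(x-y)\equiv 0\pmod m$. The key step, which is the only nontrivial part of the proof, is the cancellation property: since $\gcd(\alpha,m)=1$, we can conclude $x-y\equiv 0\pmod m$. This follows from the fact that $m\mid \alpha(x-y)$ together with $\gcd(\alpha,m)=1$ forces $m\mid(x-y)$; one can justify this either via Bezout's identity (writing $1=s\alpha+tm$ and multiplying by $x-y$) or via unique prime factorization (every prime factor of $m$ must divide $x-y$ since it does not divide $\alpha$). As $x,y\in[m]$ we have $|x-y|<m$, so $x=y$, establishing injectivity.

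For the two important special cases, I would verify the coprimality hypothesis directly. In case (a), when $m$ is prime and $1\leq \alpha<m$, the only positive divisors of $m$ are $1$ and $m$, and $m\nmid\alpha$, so $\gcd(\alpha,m)=1$; this recovers the content of Fact \ref{fact:non-zero} as the underlying cancellation law. In case (b), when $m=2^q$ and $\alpha$ is odd, the only prime factor of $m$ is $2$, which does not divide $\alpha$, so again $\gcd(\alpha,m)=1$; here the cancellation law is exactly Fact \ref{fact:non-zero-relative}, which was already used in the analysis of multiply-shift. Thus in both highlighted cases the general argument specializes to facts the paper has already proved or asserted, and the main obstacle is really just invoking the standard number-theoretic cancellation law in the general coprime setting.
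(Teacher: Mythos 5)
Your proof is correct and follows essentially the same route as the paper: both establish that $x\mapsto(\alpha x)\bmod m$ is injective on the finite set $[m]$ (hence a bijection, hence uniformity-preserving) by reducing $\alpha(x-y)\equiv 0\pmod m$ to $m\mid(x-y)$ via coprimality, with the paper using unique prime factorization exactly as in your second justification. Your explicit verification of cases (a) and (b) and the tie-in to Facts \ref{fact:non-zero} and \ref{fact:non-zero-relative} is a small addition the paper leaves implicit, but the core argument is identical.
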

\begin{proof} 
We want to show that for every $y\in[m]$ there is
at most one $x\in[m]$ such that $(\alpha x)\bmod m=y$, for then
there must be exactly one $x\in [m]$ for each $y\in[m]$, and vice versa.
Suppose we had distinct $x_1,x_2\in[m]$ such that
$(\alpha x_1)\bmod m=y=(\alpha x_2)\bmod m$. Then
$\alpha(x_2-x_1)\bmod m=0$, so $m$ is a divisor of $\alpha(x_2-x_1)$.
By the fundamental theorem of arithmetic, every positive integer
has a unique prime factorization, so all prime factors of $m$ have
to be factors of $\alpha(x_2-x_1)$ in same or higher powers. Since
$m$ and $\alpha$ are relatively prime, no prime factor of $m$ is factor
of $\alpha$, so the prime factors of $m$ must all be factors of $x_2-x_1$ in same or higher powers. Therefore
$m$ must divide $x_2-x_1$, contradicting the assumption $x_1\not\equiv x_2 \pmod m$.
Thus, as desired, for any $y\in [m]$, there is at most one $x\in[m]$ such that
$(\alpha x)\bmod m=y$. 
\end{proof}
\begin{theorem}\label{thm:shift-strong}  When $a,b\in [2^{\ol w}]$ are uniform and independent,
the multiply-shift scheme from \req{eq:shift-strong} is strongly universal.
\end{theorem}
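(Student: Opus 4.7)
The plan is to invoke the equivalent characterization from the preceding observation: strong universality amounts to (i) each $h_{a,b}(x)$ being uniform in $[2^\ell]$ and (ii) $h_{a,b}(x)$ and $h_{a,b}(y)$ being independent for any distinct $x, y$. Part (i) is immediate: for any fixed $a$, the map $b \mapsto (ax+b) \bmod 2^{\ol w}$ is a bijection on $[2^{\ol w}]$, so a uniform $b$ yields a uniform $(ax+b) \bmod 2^{\ol w}$, whose top $\ell$ bits are uniform in $[2^\ell]$. All the real work is in proving (ii).

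For independence, fix distinct $x, y \in [2^w]$, WLOG $y > x$, and set $d = y - x \in \{1,\ldots,2^w - 1\}$ and $c = ad \bmod 2^{\ol w}$. Then $(ay+b) \bmod 2^{\ol w} = ((ax+b) + c) \bmod 2^{\ol w}$. Decompose $c = c_H \cdot 2^{\ol w - \ell} + c_L$ with $c_H \in [2^\ell]$, $c_L \in [2^{\ol w - \ell}]$, and write $B := (ax+b) \bmod 2^{\ol w}$ as $B_H \cdot 2^{\ol w - \ell} + B_L$. Then $h(x) = B_H$ and $h(y) = (B_H + c_H + \kappa) \bmod 2^\ell$, where $\kappa := [B_L + c_L \geq 2^{\ol w - \ell}]$ is the carry out of the lower block. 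Conditional on $a$, a uniform $b$ makes $B_H$ and $B_L$ independent and uniform, so $h(x)$ is uniform in $[2^\ell]$ independently of $\kappa$, while $\Pr_b[\kappa = 1 \mid a] = c_L / 2^{\ol w - \ell}$.

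The main obstacle, and the heart of the proof, is the carry. For a single $a$, the joint $\Pr_b[h(x)=q,\, h(y)=r \mid a]$ is nonzero only when $r - q \equiv c_H$ or $c_H + 1 \pmod{2^\ell}$, which is plainly not a product. The resolution is the key claim that under uniform $a \in [2^{\ol w}]$ the blocks $c_H$ and $c_L$ are \emph{independent} and $c_H$ is uniform in $[2^\ell]$. I would prove this by writing $d = z \cdot 2^i$ with $z$ odd and $0 \leq i \leq w - 1$, so that $c = (az \bmod 2^{\ol w - i}) \cdot 2^i$. Fact \ref{fct:*a}, applied with the odd multiplier $z$ and modulus $2^{\ol w - i}$, shows $\alpha := az \bmod 2^{\ol w - i}$ is uniform in $[2^{\ol w - i}]$. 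The hypothesis $\ol w \geq w + \ell - 1$ combined with $i \leq w - 1$ gives $\ol w - \ell - i \geq 0$, so the top $\ell$ bits of $\alpha$ (which produce $c_H$) and the bottom $\ol w - \ell - i$ bits (which produce $c_L / 2^i$) lie in disjoint positions within $\alpha$, hence are independent.

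Once the claim is in hand, $\Pr[h(x)=q,\, h(y)=r]$ equals $1/2^\ell$ (from the marginal uniformity of $h(x)$) times the expectation over $a$ of $\mathbf{1}[c_H \equiv r - q](1 - c_L/2^{\ol w - \ell}) + \mathbf{1}[c_H \equiv r - q - 1](c_L/2^{\ol w - \ell})$. By the independence of $c_H$ and $c_L$ and the uniformity of $c_H$, this expectation factors as $(1/2^\ell)\bigl(\E[1 - c_L/2^{\ol w - \ell}] + \E[c_L/2^{\ol w - \ell}]\bigr) = 1/2^\ell$. Multiplying yields $1/2^{2\ell}$, establishing pairwise independence and hence strong universality.
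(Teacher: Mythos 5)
Your proof is correct, and it reaches the conclusion by a genuinely different route than the paper, even though both arguments rest on the same number-theoretic core: writing $y-x = z2^i$ with $z$ odd and using Fact \ref{fct:*a}(b) to conclude that $az \bmod 2^{\ol w - i}$ is uniform, with the hypothesis $\ol w \ge w+\ell-1$ guaranteeing that the shift by $i \le w-1$ never reaches into the output window. Where you diverge is in how independence of $h(x)$ and $h(y)$ is extracted. The paper avoids carries altogether: it restricts attention to the bit window $[s,\ol w)$ (your $i$ is its $s$), observes that $a(y-x)[0,s)=0$ means the addition $(ay+b)=(ax+b)+a(y-x)$ produces no incoming carry into that window, and then uses the clean fact that adding an independent uniform variable modulo $2^{\ol w - s}$ to $(ax+b)[s,\ol w)$ yields a result that is uniform and independent of it; both hash values are then projections of two independent uniform variables. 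You instead confront the carry $\kappa$ head-on, condition on $a$, compute the exact conditional joint probability, and show the carry washes out in the average over $a$ because $c_H$ (which decides which of the two residues $r-q$ or $r-q-1$ is hit) is uniform and independent of $c_L$ (which controls $\Pr[\kappa=1]$) --- these being disjoint bit ranges of the uniform variable $\alpha$. Your version is more computational but makes explicit exactly why the carry is harmless, which is instructive; the paper's window trick is slicker and is deliberately packaged as Lemma \ref{lem:shift-strong} so it can be reused for vector and pair-multiply-shift hashing, where your explicit carry bookkeeping would become more cumbersome. One small presentational point: your claim that ``$h(x)$ is uniform independently of $\kappa$'' deserves the one-line justification that $B_H$ and $B_L$ are independent because $B$ is uniform in $[2^{\ol w}]$ for every fixed $a$, and $\kappa$ depends on $b$ only through $B_L$; you clearly know this, but it is the hinge on which the factorization turns.
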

\begin{proof}
Consider any distinct keys $x,y\in [2^w]$. We want to show that
$h_{a,b}(x)$ and $h_{a,b}(y)$ are independent uniformly distributed variables
in $[2^\ell]$.

Let $s$ be the index of the least
significant $\ttl$-bit in $(y-x)$ and let $z$ be the odd number such that
$(y-x)=z2^s$. Since $z$ is odd and $a$ is uniform in $[2^{\ol w}]$, by Fact \ref{fct:*a} (b), we have that
$az$ is uniform in $[2^{\ol w}]$. Now $a(y-x)=az 2^s$ has all $\tto$s in
bits $0,..,s-1$ and a uniform distribution on bits
$s,..,s+\ol w-1$. The latter implies that
$a(y-x)[s,..,\ol w-1]$ is uniformly distributed in $[2^{\ol w-s}]$.

Consider now any fixed value of $a$. Since $b$ is still uniform in
$[2^{\ol w}]$, we get that $(ax+b)[0,\ol w)$ is uniformly distributed, implying
that $(ax+b)[s,\ol w)$ is uniformly distributed. This holds for any fixed value of $a$, so we conclude that $(ax+b)[s,\ol w)$ and 
$a(y-x)[s,\ol w)$ are independent random variables, each uniformly distributed
in $[2^{\ol w-s}]$.

Now, since $a(y-x)[0,s)=0$, we get that 
\[(ay+b)[s,\infty)=((ax+b)+a(y-x))[s,\infty)=(ax+b)[s,\infty)+a(y-x)[s,\infty).\]
The fact that $a(y-x)[s,\ol w)$ is uniformly distributed independently of
$(ax+b)[s,\ol w)$ now implies that $(ay+b)[s,\ol w)$ is uniformly distributed
independently of $(ax+b)[s,\ol w)$. However, $\ol w\geq w+\ell-1$ and $s<w$
so $s\leq w-1\leq \ol w-\ell$. Therefore 
$h_{a,b}(x)=(ax+b)[\ol w-\ell,\ol w)$ and 
$h_{a,b}(y)=(ay+b)[\ol w-\ell,\ol w)$ are independent uniformly 
distributed variables in $[2^\ell]$.
\end{proof}
In order to reuse the above proof in more complicated settings, we
crystallize a technical lemma from the last part:
\begin{lemma}\label{lem:shift-strong}  Let $\ol w\geq w+\ell -1$. 
Consider a random function $g:U\fct[2^{\ol w}]$  with the property that there for any distinct
  $x,y\in U$ exists a positive $s<w$, determined by $x$ and $y$ (and
  not by $g$), such that
  $(g(y)-g(x))[0,s)=0$ while $(g(y)-g(x))[s,\ol w)$ is uniformly
      distributed in $[2^{\ol w-s}]$. For $b$ uniform in $[2^{\ol w}]$ and
independent of $g$, 
    define $h_{g,b}:U\fct [2^\ell]$ by
\[h_{g,b}(x)=(g(x)+b)[\ol w-\ell,\ol w).\]
Then $h_{g,b}(x)$ is strongly universal.
\end{lemma}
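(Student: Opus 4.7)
The plan is to mirror the last portion of the proof of Theorem~\ref{thm:shift-strong}, but with the random function $g$ replacing $x\mapsto ax[0,\ol w)$ and citing only the two hypotheses on $g$. Fix distinct $x,y\in U$ and let $s<w$ be the integer guaranteed by the hypothesis, so that $(g(y)-g(x))[0,s)=0$ while $(g(y)-g(x))[s,\ol w)$ is uniform in $[2^{\ol w-s}]$. The goal is to show that the pair $(h_{g,b}(x),h_{g,b}(y))$ is uniform on $[2^\ell]^2$.

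First I would argue independence of $(g(x)+b)[s,\ol w)$ from $(g(y)-g(x))[s,\ol w)$ by conditioning on $g$. For any fixed value of $g$, the number $b$ is still uniform on $[2^{\ol w}]$, so $(g(x)+b)\bmod 2^{\ol w}$ is uniform on $[2^{\ol w}]$, and therefore its top $\ol w-s$ bits $(g(x)+b)[s,\ol w)$ are uniform on $[2^{\ol w-s}]$. Since this distribution does not depend on $g$, the random variable $(g(x)+b)[s,\ol w)$ is independent of $g$, and in particular independent of $(g(y)-g(x))[s,\ol w)$, which is a function of $g$ alone.

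Next, because $(g(y)-g(x))[0,s)=0$, adding $g(y)-g(x)$ to $g(x)+b$ produces no carry into bit $s$, and hence
\[(g(y)+b)[s,\ol w)\;=\;\bigl((g(x)+b)[s,\ol w)+(g(y)-g(x))[s,\ol w)\bigr)\bmod 2^{\ol w-s}.\]
Since $(g(y)-g(x))[s,\ol w)$ is uniform on $[2^{\ol w-s}]$ and independent of $(g(x)+b)[s,\ol w)$, this identity exhibits $(g(y)+b)[s,\ol w)$ as a uniform shift of an independent uniform variable, so the pair $\bigl((g(x)+b)[s,\ol w),\,(g(y)+b)[s,\ol w)\bigr)$ is uniform on $[2^{\ol w-s}]^2$.

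Finally, from $\ol w\geq w+\ell-1$ and $s<w$ we obtain $s\leq w-1\leq \ol w-\ell$, so the window $[\ol w-\ell,\ol w)$ is contained in $[s,\ol w)$. Extracting this subwindow from each coordinate preserves independence and uniformity, yielding that $h_{g,b}(x)$ and $h_{g,b}(y)$ are independent and uniform on $[2^\ell]$; by the equivalent characterization of strong universality noted in the text, this is exactly what we need. The only subtle step is the independence argument in the second paragraph, where it is tempting to confuse the joint distribution of $(g,b)$ with the marginal of $b$; I would make the conditioning on $g$ fully explicit to avoid this pitfall.
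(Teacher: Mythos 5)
Your proof is correct and follows essentially the same route as the paper's, which obtains the lemma by abstracting the last part of the proof of Theorem~\ref{thm:shift-strong}: condition on $g$ to get that $(g(x)+b)[s,\ol w)$ is uniform and independent of $(g(y)-g(x))[s,\ol w)$, use the vanishing low bits to write $(g(y)+b)[s,\ol w)$ as a modular sum, and then restrict to the window $[\ol w-\ell,\ol w)\subseteq[s,\ol w)$. Your explicit reduction modulo $2^{\ol w-s}$ and the explicit conditioning on $g$ are, if anything, slightly more careful than the original.
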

In the proof of Theorem~\ref{thm:shift-strong}, we would have $U=[2^w]$ and $g(x)=ax[0,\ol w)$, and $s$ was the least significant set bit in $y-x$.

\subsection{Vector multiply-shift}
Our strongly universal multiply shift scheme generalizes nicely to 
vector hashing. The goal is to get strongly universal hashing 
from $[2^w]^d$ to $2^\ell$. With $\ol w\geq w+\ell -1$, we pick
independent uniform $a_0,\ldots,a_{d-1},b\in [2^{\ol w}]$ and
define $h_{a_0,\ldots,a_{d-1},b}:[2^w]^d\fct [2^\ell]$ by
\begin{equation}\label{eq:vector-shift}
h_{a_0,\ldots,a_{d-1},b}(x_0,\ldots,x_{d-1})=
\left(\left(\sum_{i\in [d]}a_ix_i\right)+b\right)[\ol w-\ell,\ol w).
\end{equation}
\begin{theorem}\label{thm:vector-shift}
The vector multiply-shift scheme from \req{eq:vector-shift}
is strongly universal.
\end{theorem}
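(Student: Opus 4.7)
The plan is to reduce Theorem~\ref{thm:vector-shift} to Lemma~\ref{lem:shift-strong} by setting $U=[2^w]^d$ and $g(x_0,\ldots,x_{d-1})=\bigl(\sum_{i\in[d]} a_ix_i\bigr)[0,\ol w)$, so that $h_{a_0,\ldots,a_{d-1},b}=h_{g,b}$. Since $b$ is independent of the $a_i$'s, the independence hypothesis of the lemma is immediate, and all that needs to be verified is the structural condition: for any distinct $x,y\in[2^w]^d$ there is some $s<w$, determined only by $x$ and $y$, such that $(g(y)-g(x))[0,s)=0$ while $(g(y)-g(x))[s,\ol w)$ is uniformly distributed in $[2^{\ol w-s}]$.

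To define $s$, first I look at the set $D=\{j\in[d] : x_j\neq y_j\}$, which is nonempty. For each $j\in D$, let $s_j<w$ be the index of the least significant $\ttl$-bit in $y_j-x_j$; this is well-defined since $0<|y_j-x_j|<2^w$. Set $s=\min_{j\in D} s_j$, and fix some particular $j^\star\in D$ attaining this minimum. Write $y_j-x_j=z_j\, 2^s$ for every $j\in D$, where $z_{j^\star}$ is odd (since $s=s_{j^\star}$) and the other $z_j$ are nonnegative integers. For $j\notin D$, set $z_j=0$. Then
\[g(y)-g(x) \equiv \sum_{i\in[d]} a_i(y_i-x_i) = 2^s\sum_{i\in[d]} a_i z_i \pmod{2^{\ol w}},\]
so bits $0,\ldots,s-1$ of $g(y)-g(x)$ are all $\tto$, giving $(g(y)-g(x))[0,s)=0$.

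For the uniformity of the high bits, the key step is to condition on the values of $a_i$ for all $i\neq j^\star$; under this conditioning, $T:=\sum_{i\neq j^\star} a_i z_i$ is a fixed integer and $a_{j^\star}$ remains uniform in $[2^{\ol w}]$. Since $z_{j^\star}$ is odd, Fact~\ref{fct:*a}(b) gives that $a_{j^\star}z_{j^\star}\bmod 2^{\ol w}$ is uniform in $[2^{\ol w}]$, hence $(a_{j^\star}z_{j^\star}+T)\bmod 2^{\ol w}=\sum_i a_iz_i\bmod 2^{\ol w}$ is uniform in $[2^{\ol w}]$. Reducing modulo $2^{\ol w-s}$ preserves uniformity on the low $\ol w-s$ bits, and these low bits become precisely bits $s,\ldots,\ol w-1$ of $2^s\sum_i a_iz_i\bmod 2^{\ol w}$, i.e., $(g(y)-g(x))[s,\ol w)$. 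Since the conditional distribution is uniform for every fixing of $(a_i)_{i\neq j^\star}$, the unconditional distribution is uniform as well.

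The only point requiring any care is making sure $s$ depends only on $x,y$ (and not on the random $a_i$'s), which is clear from the definition above, and verifying $s<w$, which follows from $s_{j^\star}<w$. With these, Lemma~\ref{lem:shift-strong} applies and yields strong universality of $h_{a_0,\ldots,a_{d-1},b}$. The main subtlety is the conditioning argument showing that one ``free'' coordinate $j^\star\in D$ suffices to randomize the entire sum $\sum_i a_iz_i$; the rest is bookkeeping.
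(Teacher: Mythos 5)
Your proposal is correct and follows essentially the same route as the paper's proof: reduce to Lemma~\ref{lem:shift-strong} with $g(x)=\bigl(\sum_i a_ix_i\bigr)[0,\ol w)$, take $s$ to be the minimum over differing coordinates of the least significant set bit, and condition on all $a_i$ with $i\neq j^\star$ so that Fact~\ref{fct:*a}(b) applied to the one odd multiplier $z_{j^\star}$ randomizes the whole sum. (The only nit: the $z_j$ for $j\neq j^\star$ need not be nonnegative, but since only their integrality is used this is harmless.)
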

\begin{proof}
We will use Lemma \ref{lem:shift-strong} to prove that this scheme is strongly universal. We define
$g:[2^w]^d\fct [2^{\ol w}]$ by
\[g(x_0,\ldots,x_{d-1})=\left(\sum_{i\in [d]}a_ix_i\right)[0,\ol w).\]
Consider two distinct keys $x=(x_0,\ldots,x_{d-1})$ and $y=(y_0,\ldots,y_{d-1})$.
Let $j$ be an index such that $x_j\neq y_j$ and such that the index $s$ of
the least significant set bit is as small as possible. Thus $y_j-x_j$ has $\ttl$ in bit $s$, and all $i\in[d]$ have
$(y_j-x_j)[0,s)=0$. As required by Lemma \ref{lem:shift-strong}, $s$ is determined from the keys only, as required by Lemma \ref{lem:shift-strong}. Then 
\[(g(y)-g(x))[0,s)=\left(\sum_{i\in [d]}a_i(y_i-x_i)\right)[0,s)=0\]
regardless of $a_0,\ldots,a_{d-1}$. Next we need to show that
$(g(y)-g(x))[s,\ol w)$ is uniformly
      distributed in $[2^{\ol w-s}]$. The trick is to first fix
all $a_i$, $i\neq j$, arbitrarily, and then argue that $(g(y)-g(x))[s,\ol w)$
is uniform when $a_i$ is uniform in $[2^{\ol w}]$. Let $z$ be the
odd number such that $z2^s=y_j-x_j$. Also, let $\Delta$ be the constant defined by
\[\Delta 2^s=\sum_{i\in [d],i\neq j}a_i(y_i-x_j).\]
Now 
\[g(y)-g(x)=(a_jz+\Delta)2^s.\]
With $z$ odd and $\Delta$ a fixed constant, the uniform distribution
on $a_j\in[2^{\ol w}]$ implies that $(a_jz+\Delta)\bmod 2^{\ol w}$ is uniform
in $[2^{\ol w}]$ but then $(a_jz+\Delta)\bmod 2^{\ol w-s}=(g(y)-g(x))[s,\ol w)$
is also uniform in $[2^{\ol w-s}]$. Now Lemma \ref{lem:shift-strong}
implies that the vector multiply-shift scheme from \req{eq:vector-shift}
is strongly universal.
\end{proof}
\begin{exercise}\label{ex:naive-strong}
Corresponding to the universal hashing from Section \ref{sec:universal},
suppose we tried with $\ol w=w$ and just used 
random odd $a_0,\ldots,a_{d-1}\in [2^w]$ and a random
$b\in [2^w]$,
and defined
\[h_{a_0,\ldots,a_{d-1},b}(x_0,\ldots,x_{d-1})=
\left(\left(\sum_{i\in [d]}a_ix_i\right)+b\right)[w-\ell, w).\]
Give an instance showing that this simplified vector hashing scheme is not
remotely universal.
\end{exercise}
Our vector hashing can also be used for universality, 
where it gives collision probability $1/2^{\ell}$. As a small
tuning, we could skip adding $b$, but then we would only get
the same $2/2^\ell$ bound as we had in Section \ref{sec:universal}.

\subsection{Pair-multiply-shift}\label{sec:pair-mult-shift}
A cute trick from \cite{BHKKR99} allows us roughly
double the speed of vector hashing, the point being that multiplication is
by far the slowest operation involved. We will use exactly the
same parameters and seeds as for \req{eq:vector-shift}.
However, assuming that the dimension $d$ is even, we replace
\req{eq:vector-shift} by 
\begin{equation}\label{eq:vector-pair-shift}
h_{a_0,\ldots,a_{d-1},b}(x_0,\ldots,x_{d-1})=
\left(\left(\sum_{i\in [d/2]}(a_{2i}+x_{2i+1})(a_{2i+1}+x_{2i})
\right)+b\right)[\ol w-\ell,\ol w).
\end{equation}
This scheme handles pairs of coordinates $(2i,2i+1)$ with a single multiplication. Thus, with $\ol w=64$ and $w=32$, we handle
each pair of $32$-bit keys with a single $64$-bit multiplication.
\begin{exercise} [a bit more challenging]\label{ex:pair-multiply-shift}
Prove that the scheme defined by \req{eq:vector-pair-shift} is
strongly universal. One option is to prove a tricky
generalization of Lemma \ref{lem:shift-strong} where $(g(y)-g(x))[0,s)$
  may not be $0$ but can be any deterministic function of $x$ and
  $y$. With this generalization, you can make a proof similar to that for Theorem
  \ref{thm:vector-shift} with the same definition of $j$ and $s$.
\end{exercise}
Above we have assumed that $d$ is even. In particular this is a case,
if we want to hash an array of 64-bit integers, but cast it as an 
array of 32-bit numbers. If $d$ is odd, we can use the pair-multiplication for
the first $\floor{d/2}$ pairs, and then just add $a_dx_d$ to the sum.

\paragraph{Strongly universal hashing of 64-bit keys to 32 bits}
For the in practice quite important case where we want strongly
 universal hashing of 64-bit keys to at most 32 bits, 
we can use the following tuned code:
\begin{verbatim}
#include <stdint.h> 
// defines uint32/64_t as unsigned 32/64-bit integer.
uint32_t hash(uint64_t x, uint32_t l, 
              uint64_t a1, uint64_t a2, uint64_t b) {
  // hashes 64-bit x strongly universally into l<=32 bits 
  // using the random seeds a1, a2, and b. 
  return ((a1+x)*(a2+(x>>32))+b) >> (64-l);
}
\end{verbatim}
The proof that this is indeed strongly universal
is very similar to the one used for Exercise \ref{ex:pair-multiply-shift}.

\section{Fast hashing to arbitrary ranges}\label{sec:ranges}
Using variants of multiply-shift, we have shown very efficient methods
for hashing into $\ell$-bit hash values for $\ell\leq 32$, but what if
we want hash values in $[m]$ for any given $m<2^\ell$.

The general problem is if we have a good hash function 
$h:U\rightarrow [M]$, and now we want hash values in
$[m]$ where $m<M$. What we need is a function $r:[M]\fct[m]$ that is
{\em most uniform\/} in the sense that for any $z\in [m]$, the
number of $y\in [M]$ that map to $z$ is either $\floor{M/m}$ or $\ceil{M/m}$.
\begin{exercise} 
Prove that if $h$ is $c$-approximately strongly universal and $r$ is most uniform, then
$r\circ h$, mapping $x$ to $r(h(x))$ is $(1+m/M)c$-approximately strongly universal.
\end{exercise}
An example of a most uniform function $r$ is $y\mapsto 
y\bmod m$. We already used this $r$ in Exercise \ref{ex:prime-strong} where
we first had a [1-approximately] strongly universal hash function into $[p]$ and then
applied $\bmod m$. However, computing $\bmod m$ is much more expensive than 
a multiplication on most computers unless $m$ is a power-of-two. An alternative way to get
a most uniform hash function $r:[M]\fct[m]$ is to set
\begin{equation}\label{eq:range-div}
r(y)=\floor{ym/M}.
\end{equation}
\begin{exercise} 
Prove that $r$ defined in \req{eq:range-div} is most uniform.
\end{exercise}
While \req{eq:range-div} is not fast in general, it is very fast if
$M=2^\ell$ is a power-of-two, for then the division is just a right
shift, and then \req{eq:range-div} is computed by \texttt{(y*m)>>l}.
One detail to note here is that the product $ym$ has to be computed in
full with no discarded overflow, e.g., if $y$ and $m$ are 32-bit integers,
we need
64-bit multiplication. Combining this with the code for strongly
universal multiply-shift from Section \ref{sec:strong-mult-shift}, we
hash a 32-bit integer $x$ to a number in $[m]$, $m<2^{32}$, using the
C-code:
\begin{verbatim}
#include <stdint.h> 
// defines uint32/64_t as unsigned 32/64-bit integer.
uint32_t hash(uint32_t x, uint32_t m, uint64_t a, uint64_t b) {
  // hashes x strongly universally into the range [m]
  // using the random seeds a and b.
  return (((a*x+b)>>32)*m)>>32;
}
\end{verbatim}
Above, $x$ and $m$ are 32-bit integers while $a$ and $b$ are uniformly 
random 64-bit 
integers. We note that all the above calculations are done with
64-bit integers since they all involve 64-bit operands. As required
in Section \ref{sec:strong-mult-shift}, we automatically discard the 
overflow beyond 64 bits from \texttt{a*x}. However
\texttt{(a*x+b)>>32} only uses the 32 least significant bits,
so multiplied with the 32-bit integer \texttt m, we get
the exact product in 64 bits with no overflow. From the above
Exercises, it immediately follows that he C-code function above is
a $2$-approximately strongly universal hash function from 32-bit integers to integers
in $[m]$.

\subsection{Hashing to larger ranges}\label{sec:large-range}
So far, we have been focused on hashing to $32$-bit numbers or less.
If we want larger hash values, the most efficient method is often just
to use multiple hash functions and concatenate the output. The idea is
captured by the following exercise.
\begin{exercise}\label{ex:combine-range}
Let $h_0$ be a $c_0$-approximately strongly universal hash function from $U$ to $R_0$
and $h_1$ be a $c_1$-approximately strongly universal hash function from $U$ to $R_1$.
Define the combined hash function $h:U\fct R_0\times R_1$ by 
\[h(x)=(h_0,h_1).\]
Prove that $h$ is $(c_0c_1)$-approximately strongly universal.
\end{exercise}
A simple application of Exercise \ref{ex:combine-range} is if
$h_0$ and $h_1$ are the strongly universal hash functions from Section \ref{sec:pair-mult-shift}, generating 32-bit hash values from 64-bit keys. 
Then the combined hash  function $h$ is a strongly universal 
function from 64-bit keys to 64-bit hash values. It is the
fastest such hash function known, and it uses only two 64-bit multiplications.

\section{String hashing}\label{sec:strings}

\subsection{Hashing vector prefixes}\label{sec:var-max}
Sometimes what we really want is to hash vectors of length up to $D$
but perhaps smaller. As in the multiply-shift hashing schemes, we assume
that each coordinate is from $[2^w]$. The simple point is that we only want
to spend time proportional to the actual length $d\leq D$. With $\ol w\geq w+\ell -1$, we pick
independent uniform $a_0,\ldots,a_{D-1}\in [2^{\ol w}]$. For
even $d$, we define
$h:\bigcup_{\textnormal{even\ } d\,\leq \,D}[2^w]^d\fct [2^\ell]$ by
\begin{equation}\label{eq:vector-pair-shift-var}
h_{a_0,\ldots,a_D}(x_0,\ldots,x_{d-1})=
\left(\left(\sum_{i\in [d/2]}(a_{2i}+x_{2i+1})(a_{2i+1}+x_{2i})
\right)+a_d\right)[\ol w-\ell,\ol w).
\end{equation}
\begin{exercise}
Prove that the above even prefix version of pair-multiply-shift
is strongly universal. In the proof you may assume that the
original pair-multiply-shift from \req{eq:vector-pair-shift} is strongly universal, as you may have proved in 
Exercise \ref{ex:pair-multiply-shift}. Thus we
are considering two vectors $x=(x_0,\ldots,x_{d-1})$ and 
$y=(y_0,\ldots,y_{d'-1})$. You should consider both the case
$d'=d$ and $d'\neq d$.
\end{exercise}

\subsection{Hashing bounded length strings}\label{sec:bounded-length-strings}
Suppose now that we want to hash strings of 8-bit characters, e.g.,
these could be the words in a book. Then the nil-character is
not used in any of the strings. Suppose that we only want to
handle strings up to some maximal length, say, 256. 

With the prefix-pair-multiply-shift scheme from \req{eq:vector-pair-shift-var},
we have a very fast way of hashing strings of $d$ 64-bit integers,
casting them as $2d$ 32-bit integers.
A simple trick now is to allocate a single array $x$ of $256/8=32$ 64-bit integers.
When we want to hash a string $s$ with $c$ characters, we first
set $d=\lceil c/8\rceil$ (done fast by \texttt{d=(c+7)>>3}). Next
we set $x_{d-1}=0$, and finally we do a memory copy of $s$ into $x$ (using
a statement like \texttt{memcpy(x,s,c)}).
Finally, we apply \req{eq:vector-pair-shift-var} to $x$.

Note that we use the same variable array $x$ every time we want to hash a string $s$.
Let $s^*$ be the image of $s$ created as a $c^*=\lceil c/8\rceil$ 
length prefix of $x$.
\begin{exercise}\label{ex:bounded-length-strings} Prove that if $s$ and $t$ are two strings of length
at most 256, neither containing the nil-character, then their images
$s^*$ and $t^*$ are different. Conclude that we now have a strongly 
universal hash functions for such strings.
\end{exercise}
\begin{exercise}\label{ex:book} Implement the above hash function for strings. Use it in a chaining based hash table, and apply it
to count the number of distinct words in a text (take any pdf-file and convert it to ASCII, e.g., using \texttt{pdf2txt}).

To get the random numbers defining your hash functions, you can go to \texttt{random.org}.

One issue to consider when you implement a hash table is that you want
the number $m$ of entries in the hash array to be as big as the
number of elements (distinct words), which in our case is not known in
advance. Using a hash table of some start size $m$, you can maintain a
count of the distinct words seen so far, and then double the size
when the count reaches, say, $m/2$.

Many ideas can be explored for optimization, e.g., if we are willing
to accept a small false-positive probability, we can replace each word
with a 32- or 64-bit hash value, saying that a word is new only if it
has a new hash value.

Experiment with some different texts: different languages, and different lengths. What happens with the vocabulary?

The idea now is to check how
much time is spent on the actual hashing, as compared with the real
code that both does the hashing and follows the chains in the hash
array. However, if we just compute the hash values, and don't use them, then
some optimizing compilers, will notice, and just do nothing. You should therefore add up all the hash values, and output the result, just to force the compiler to do the computation.
\end{exercise}

\drop{
\paragraph{Linear Probing}
As an alternative in Exercise \ref{ex:book}, you could implement the
hash table using linear probing. However, with linear probing, if you want the
hashing to be safe in the sense of getting expected constant time for
any input, then this does require some more powerful hash
functions (which we have not covered yet). Currently, the most efficient schemes combine methods from
\cite{PT12:charhash,Tho09:string-hash}. Referring to
\cite[\S 1, \S 6]{PT12:charhash} for the definition of tabulation hashing, you can
apply it directly on words with up to 4 characters, that is,
32-bits. If the words are longer, you can first hash them, as
described above, down to 32-bits, and then rehash them using simple
tabulation. The reason that this works is explained in 
\cite{Tho09:string-hash}. In
\cite[\S 6]{Tho09:string-hash} you will also see that it
is possible to store the distinct strings
directly in the hash array.

}

\subsection{Hashing variable length strings}\label{sec:var}
We now consider the hashing of a variable length string
$x_0x_1\cdots x_d$ where all characters belong to some domain
$[u]$.

We use the method from \cite{DGMP92}, which first picks a prime $p\geq u$.
The idea is to view  $x_0,\ldots,x_d$ as coefficients of a degree $d$ 
polynomial 
\[P_{x_0,\ldots,x_d}(\alpha)=\sum_{i=0}^d x_i\alpha^i\bmod p\]
over $\mathbb Z_p$. As seed for our hash function, we
pick an argument $a\in [p]$, and
compute the hash function
\[h_a(x_0\cdots x_d)=P_{x_0,\ldots,x_d}(a).\]
Consider some other string 
$y=y_0y_1\cdots y_{d'}$, $d'\leq d$. We claim that
\[\Pr_{a\in[p]}[h_a(x_0\cdots x_d)=h_a(y_0\cdots y_{d'})]\leq d/p\]
The proof is very simple. By definition, the collision happens
only if $a$ is root in the polynomial 
$P_{y_0,\ldots,y_{d'}}-P_{x_0,\ldots,x_d}$. Since the strings are different,
this polynomial is not the constant zero. Moreover, its degree is at most
$d$. Since the degree is at most $d$, the fundamental theorem of algebra
tells us that it has at most $d$ distinct roots, and the probability that a random $a\in[p]$ is among these roots is at most $d/p$.

Now, for a fast implementation using Horner's rule, it is better to reverse the order of the coefficients, and instead use the polynomial 
\[P_{x_0,\ldots,x_d}(a)=\sum_{i=0}^d x_{d-i}a^i\bmod p\]
Then we compute $P_{x_0,\ldots,x_d}(a)$ using the recurrence
\begin{itemize}
\item $H^0_a=x_0$
\item $H^i_a=(a H^{i-1}_a+x_i) \bmod p$
\item $P_{x_0,\ldots,x_d}(a)=H^d_a$.
\end{itemize}
With this recurrence, we can easily update the hash value if new
character $x_{d+1}$ is added to the end of the string $x_{d+1}$.
It only takes an addition and a multiplication modulo $p$. For speed, we would let $p$ be a Mersenne prime, e.g. $2^{89}-1$. 


The collision probability $d/p$ may seem fairly large, but assume that
we only want hash values in the range $m\leq p/d$, e.g, for
$m=2^{32}$ and $p=2^{89}-1$, this would allow for strings of length up to
$2^{57}$, which is big enough for most practical purposes. Then
it suffices to compose the string hashing with a universal hash function
from $[p]$ to $[m]$. Composing with the previous multiply-mod-prime scheme, we
end up using three random seeds $a,b,c\in [p]$, and then
compute the hash function as
\[h_{a,b,c}(x_0,\ldots,x_d)=\left(\left(a\left(\sum_{i=0}^d x_{d-i} c^i\right)+b\right)\bmod p\right)\bmod m.\]
\begin{exercise}\label{ex:var-2univ} Consider two strings $\vec x$ and $\vec y$,
each of length at most $p/m$. Argue that the 
the collision probability $\Pr[h_{a,b,c}(\vec x)=h(\vec y)]\leq 2/m$.
Thus conclude that $h_{a,b,c}$ is a 2-approximately universal hash function mapping
strings of length at most $p/m$ to $[m]$.
\end{exercise}
Above we can let $u$ be any value bounded by $p$. With $p=2^{89}-1$, we
could use $u=2^{64}$ thus dividing the string into $64$-bit characters.
\begin{exercise} Implement the above scheme and run it to get a $32$-bit
signature of a book. 
\end{exercise}
\paragraph*{Major speed-up}
The above code is slow because of the multiplications modulo Mersenne primes, one for every 64 bits in the string.

An idea for a major speed up is to divide you string into chunks $X_0,\ldots,X_j$
of 32 integers of 64 bits, the last chunk possibly being shorter. We want a single universal 
hash function 
$r:\bigcup_{d\leq 32}[2^{64}]^d\fct[2^{64}]$. A good choice would be
to use our strongly universal pair-multiply-shift scheme from \req{eq:vector-pair-shift-var}. It only outputs 32-bit numbers, but if we use two different
such functions, we can concatenate their hash values in a single 64-bit
number.
\begin{exercise} Prove that if $r$ has collision probability $P$, and
if $(X_0,\ldots,X_j)\neq(Y_0,\ldots,Y_{j'})$, then
\[\Pr[(r(X_0),\ldots,r(X_j))=(r(Y_0),\ldots,r(Y_{j'}))]\leq P.\]
\end{exercise}
The point above is that in the above is that $r(X_0),\ldots,r(X_j)$
is 32 times shorter than $X_0,\ldots,X_j$. We can now apply our slow
variable length hashing based on Mersenne primes to the
reduced string $r(X_0),\ldots,r(X_j)$. This only adds $P$ to the overall collision probability.
\begin{exercise}
Implement the above tuning. How much faster is your hashing now?
\end{exercise}

\paragraph{Splitting between short and long strings} When 
implementing a generic string hashing code, we do not know in advance 
if it is going to be applied mostly to short or to long strings.
Our scheme for bounded length strings from Section
\ref{sec:bounded-length-strings} is faster then the generic scheme presented
above for variable length strings. In practice it is a good idea to 
implement both: have the scheme from Section
\ref{sec:bounded-length-strings} implemented for strings of length up to 
some $d$, e.g., $d$ could be 32 64-bit integers as in the above blocks, and
then only apply the generic scheme if the length is above $d$.

\paragraph{Major open problem} Can we get something simple and fast 
like multiply-shift to work directly for strings, so that we do not
need to compute polynomials over prime fields?

\section{Beyond strong universality}\label{sec:beyond-strong}
In this note, we have focused on universal and strongly universal
hashing. However, there are more advanced algorithmic
applications that need more powerful hash functions. This lead Carter
and Wegman \cite{CW81} to introduce $k$-independent hash functions. A
hash random function $H: U \to [m]$ is {\em $k$-independent\/} if for
any distinct keys $x_1, \dots, x_k \in [u]$, the hash values $H(x_1),
\dots, H(x_k)$ are independent random variables, each uniformly
distributed in $[m]$. In this terminology, 2-independence is the same
as strongly universal. For prime $p$, we can implement a
$k$-independent $H:[p]\to[p]$ using $k$ random coefficients
$a_0,\ldots,a_{k-1}\in [p]$, defining
\[H(x)=\sum_{i=0}^{k-1}a_i x^i \bmod p.\]
However, there is no efficient implementation of $k$-independent
hashing on complex objects such as variable length strings.  What we
can do instead is to use the signature idea from Section
\ref{sec:universal}, and first hash the complex keys to
unique hash values from a limited integer domain $[u]$.
\begin{exercise}
Let $h:U\to[m]$ map $S\subseteq U$ collision free to $[u]$ and let
$H:[u]\to[m]$ be $k$-independent. Argue that $H\circ h:U\to [m]$,
mapping $x$ to $H(h(x))$, is $k$-independent when restricted to keys
in $S$.
\end{exercise}
Next we have to design the hash function $h$ so that it is collision 
free with high enough probability.
\begin{exercise}
Suppose we know that we are only going to deal with a (yet unknown)
set $S$ of at most $n$ strings, each of length at most $n$, set the
parameters $p$ and $m$ of $h_{a,b,c}$ in Exercise \ref{ex:var-2univ}
so that the probability that we get any collision between strings in
$S$ is at most $1/n^2$.
\end{exercise}
The idea of using hashing to map a complex domain down to a
polynomially sized integer universe, hoping that it is collision free, is
referred to as {\em universe reduction}. This explains why we for more
complex hash function can often assume that the universe is polynomially
bounded. 

We note that $k$-independent hash functions become quite slow when $k$ is
large. Often a more efficient alternative is the tabulation based 
method surveyed in \cite{Tho17:tab-hashr}.

\section*{Acknowledgment} I would like to thank Eva Rotenberg for coming with many good comments and proposals for the text, including several exercises.


\begin{thebibliography}{10}

\bibitem{BHKKR99} J.~Black, S.~Halevi, H.~Krawczyk, T.~Krovetz, and
  P.~Rogaway.  \newblock {UMAC:} fast and secure message
  authentication.  \newblock In {\em Proc. 19th CRYPTO}, pages
  216--233, 1999.

\bibitem{CW79}
J.~Carter and M.~Wegman.
\newblock Universal classes of hash functions.
\newblock {\em J.\ Comp.\ Syst.\ Sci.}, 18:143--154, 1979.
\newblock Announced at STOC'77.

\bibitem{CormAlgo}
T.~H. Cormen, C.~E. Leiserson, R.~L. Rivest, and C.~Stein.
\newblock {\em Introduction to algorithms}.
\newblock MIT Press, McGraw-Hill, 3 edition, 2009.

\bibitem{Die96}
M.~Dietzfelbinger.
\newblock Universal hashing and $k$-wise independent random variables via
  integer arithmetic without primes.
\newblock In {\em Proc. 13th STACS, LNCS 1046}, pages 569--580, 1996.

\bibitem{DGMP92}
M.~Dietzfelbinger, J.~Gil, Y.~Matias, and N.~Pippenger.
\newblock Polynomial hash functions are reliable (extended abstract).
\newblock In {\em Proc. 19th ICALP, LNCS 623}, pages 235--246, 1992.

\bibitem{DHKP97}
M.~Dietzfelbinger, T.~Hagerup, J.~Katajainen, and M.~Penttonen.
\newblock A reliable randomized algorithm for the closest-pair problem.
\newblock {\em J. Algorithms}, 25:19--51, 1997.

\bibitem{DuffieldG08}
N.~G. Duffield and M.~Grossglauser.
\newblock Trajectory sampling with unreliable reporting.
\newblock {\em {IEEE/ACM} Trans. Netw.}, 16(1):37--50, 2008.

\bibitem{Dum56}
A.~I. Dumey.
\newblock Indexing for rapid random access memory systems.
\newblock {\em Computers and Automation}, 5(12):6--9, 1956.

\bibitem{MR95}
R.~Motwani and P.~Raghavan.
\newblock {\em Randomized Algorithms}.
\newblock Cambridge University Press, 1995.

\bibitem{Tho17:tab-hashr}
M. Thorup.
\newblock Fast and powerful hashing using tabulation (research
  highlight).
\newblock {\em Comm. ACM}, 60(7):94--101, 2017. See also \texttt{http://arxiv.org/abs/1505.01523}.


\bibitem{CW81}
M.~Wegman and J.~Carter.
\newblock New hash functions and their use in authentication and set equality.
\newblock {\em J.\ Comp.\ Syst.\ Sci.}, 22:265--279, 1981.

\end{thebibliography}

\end{document}